\documentclass[10pt,journal,twocolumn]{IEEEtran}
\usepackage{amsmath,amsfonts,amsthm}
\usepackage{amssymb}
\usepackage{algorithmic}
\usepackage{algorithm}
\usepackage[caption=false,font=normalsize,labelfont=sf,textfont=sf]{subfig}
\usepackage{textcomp}
\usepackage{stfloats}
\usepackage{url}
\usepackage{verbatim}
\usepackage{graphicx}
\usepackage{cite}
\usepackage{color}
\usepackage{geometry}
\usepackage{diagbox}
\geometry{left=1in,right=1in,top=1in,bottom=1in}
\captionsetup[subfigure]{labelformat=simple}

\newtheorem{theorem}{Theorem}
\newtheorem{lemma}{Lemma}

\newtheorem{remark}{Remark}

\begin{document}
	
	\title{Extended AB Algorithms for Bistatic Integrated Sensing and Communications Systems}
	\author{Tian~Jiao, Yanlin~Geng,~\IEEEmembership{Member,~IEEE}, Zhiqiang~Wei,~\IEEEmembership{Member,~IEEE}, and Zai~Yang,~\IEEEmembership{Senior~Member,~IEEE}
		\thanks{T.~Jiao, Z.~Wei, and Z. Yang are with the School of Mathematics and Statistics, Xi'an Jiaotong University, Xi'an 710049, China (e-mail: tianjiao@stu.xjtu.edu.cn, zhiqiang.wei@xjtu.edu.cn,  yangzai@xjtu.edu.cn). \emph{(Corresponding author: Zai~Yang.)}} 
		\thanks{Y.~Geng is with the State Key Laboratory of ISN, Xidian University, China (e-mail: ylgeng@xidian.edu.cn).}
	}
	
	\maketitle
	
	\begin{abstract}
		Integrated sensing and communication (ISAC) is pivotal for next-generation wireless networks, rendering the computation of rate-distortion trade-off in ISAC systems critically important. In this paper, we propose the extended Arimoto-Blahut (AB) algorithms to calculate the rate-distortion trade-off in bistatic ISAC systems, which overcome the limitation of existing AB algorithms in handling non-convex constraints. Specifically, we introduce auxiliary variables to transform non-convex distortion constraints into linear constraints, prove that the reformulated linearly-constrained optimization problem maintains the same optimal solution as the original problem, and develop extended AB algorithms for both squared error and logarithmic loss distortion metrics based on the framework of AB algorithm. Numerical results validate the effectiveness of the proposed algorithm.
	\end{abstract}
	
	\begin{IEEEkeywords}
		Integrated sensing and communication, rate-distortion, Arimoto--Blahut algorithm.
	\end{IEEEkeywords}
	
	\section{Introduction} \label{SecInt}
	\IEEEPARstart{I}{ntegrated} sensing and communications (ISAC) has been widely studied since it is a key technology and research area for future wireless networks (beyond 5G and 6G). In parallel to the ISAC research centered on wireless communication applications \cite{xiao2022waveform,gaudio2020effectiveness,gao2022integrated,elbir2022rise,sankar2022beamforming}, the topic has also been the focus of recent information-theoretic research. The authors in \cite{ahmadipour2022information} examined the monostatic ISAC model from the perspective of information theory, and characterized the optimal trade-off between the capacity of reliable communication and the distortion of state estimation as an optimization problem with distortion constraints. 
	Vector Gaussian channel with in-block memory was considered in \cite{xiong2023fundamental,liu2023deterministic}, where the subspace trade-off and the random-deterministic trade-off between sensing and communication were identified. The capacity-distortion region of monostatic ISAC when the receiver have imperfect state knowledge was studied in \cite{liu2022information}. 
	The bistatic radar system, serving as a complementary paradigm to monostatic configurations, demonstrates superior channel interference mitigation capability due to its spatial diversity in transmitter-receiver separation. The authors in \cite{ahmadipour2023strong} characterized the rate-distortion and rate-detection exponent of bistatic ISAC systems, respectively, where the sensing receiver estimates or detects the state based on the known sent information. Our previous work \cite{jiao2023information} considered a bistatic ISAC system in which the sensing receiver is unaware of the sent message. The fundamental trade-off between communication rate and sensing accuracy is formulated as a rate-distortion optimization problem. In \cite{chen2025fundamental}, logarithmic loss (log-loss) function was selected to measure the quality of a soft estimate, and the corresponding capacity-distortion function of the bistatic ISAC model and the closed-form solutions for Gaussian channels under some conditions were derived. Therefore, solving the rate-distortion optimization problem is crucial for ISAC systems. However, the optimization problems for bistatic ISAC systems are particularly challenging due to their non-convex nature.
	
	The Arimoto--Blahut (AB) algorithm, developed independently by Arimoto \cite{arimoto1972algorithm} and Blahut \cite{blahut1972computation}, is a widely applied method for calculating channel capacity and rate-distortion functions in information theory. To calculate the channel capacity of point-to-point channels, the AB algorithm replaces the conditional probability mass function (pmf) by a free variable and then maximizes the objective function over each variable alternatingly.
	The authors of \cite{yasui2010toward} expanded the AB algorithm to compute the capacity region of the degraded broadcast channel, which is a non-convex optimization problem. Furthermore, the authors of \cite{liu2022blahut} developed AB-type algorithms to evaluate the supporting hyperplanes of the superposition coding region and those of the Nair-El Gamal outer bound, as well as the sum-rate of Marton’s inner bound for general broadcast channel. 
	To calculate the rate-distortion function, Blahut \cite{blahut1972computation} transformed the original distortion and compression rate problem into an unconstrained parameterized problem with respect to multipliers introduced by the distortion constraint. Then, for a fixed multiplier, a pair of distortion and rate is calculated based on a framework similar to the AB algorithm above. Finally, the multiplier is traversed to obtain the complete rate-distortion function.
	However, the operation of traversing multipliers in the AB algorithm leads to high computational complexity, making it challenging to directly determine the rate under a specified distortion, which hinders its application. Moreover, while the AB algorithm is suitable for linearly constrained optimization problems, it fails to handle those with non-convex constraints.
	
	In this paper, we develop an optimization framework for rate-distortion problems with non-convex yet differentiable distortion constraints. Our work differs from classical rate-distortion theory in two key aspects: (1) The goal of the optimization problem is to maximize communication rate rather than minimize compression rate, which arises from emerging ISAC systems; (2) The distortion constraints are non-convex with respect to the optimization variable, unlike the convex distortion in classical problems.
	
	\vspace{-0.3cm}
	\section{Problem Formulation}\label{Secpf}
	In this section, we introduce the rate-distortion trade-off problem in the bistatic ISAC system with squared error (SE) distortion and log-loss distortion, respectively. 
	
	The considered bistatic ISAC system consists of a transmitter (ISAC Tx), a communication receiver (ComRx), and a sensing receiver (SenRx). The ISAC Tx sends a codeword to convey some information to the ComRx that knows the channel states perfectly. At the same time, the SenRx at another location receives both the radiated signals of the ISAC Tx and the reflected signals from the ComRx to estimate the channel states \cite{jiao2023information}. The bistatic ISAC system was modeled as a state-dependent memoryless channel (SDMC) with two receivers, 
	where the state sequence $S^n=(S_1,\ldots,S_n)$ is independent and identically distributed generated from a given state distribution $P_S(\cdot)$ and is assumed to be perfectly and noncausally available at the ComRx but unavailable at the SenRx.
	Specifically, the transmitter encodes the message $W$ into a codeword $X^n$ and transmits it over the SDMC with two receivers. After receiving $Y^n$, the ComRx obtains the estimate of the message $\hat{W}$ by combining $Y^n$ with $S^n$. After receiving $Z^n$, the SenRx outputs $\hat{S}^n$ as the estimation of the state sequence $S^n$. 
	The performance of the decoder is measured by the average probability of error $P_{e}^{(n)}=\mathrm{Pr}\big\{\hat{W} \neq W\big\}$.
	The accuracy of the state estimation is measured by the average expected distortion $	D^{(n)}:=\frac{1}{n}\mathbb{E}\big[d(S^n, \hat{S}^n)\big]=\frac{1}{n} \sum_{i=1}^n \mathbb{E}\big[d(S_i, \hat{S}_i)\big]$,
	where $d: \mathcal{S} \times \hat{\mathcal{S}} \mapsto \mathbb{R}^{+}$ is a bounded distortion function.
	
	A pair $(R,D)$ is said to be \emph{achievable} if there exists a sequence of $\left(2^{n R}, n\right)$ codes such that $\lim _{n \rightarrow \infty} P_{e}^{(n)}=0,\varlimsup_{n \rightarrow \infty} D^{(n)} \leq D$. 
	The \emph{capacity-distortion function} is defined as $C(D)=\max\{R:\text{$R$ is achievable for the given $D$} \}$.
	\vspace{-0.3cm}
	\subsection{Bistatic ISAC System with SE Distortion}
	\label{SecBA}
	When the distortion metric is selected as SE, i.e., $d(s, \hat{s})=(s-\hat{s})^2$, we have the following result for the rate-distortion trade-off of the bistatic ISAC system.
	\begin{lemma}{\rm\cite[Theorem 6]{jiao2023information}}\label{cob2}
		The capacity-distortion function $C(D)$ for the case where $(X,S) - Y - Z$ forms a Markov chain is the optimal solution to the following optimization problem
		\begin{equation} 
			\begin{aligned}\label{op}
				\max_{P_{UX}} & \ I(X;Y|U,S)+I(U;Z)\\ 
				\mathrm{s.t.} & \ 
				\mathbb{E}\big[\big(S-\hat{S}(U,Z)\big)^2\big]\leq D,
			\end{aligned}
		\end{equation}
		where
		\begin{align*}
			\hat{S}(u, z)=\arg \min _{S^{\prime}} \sum_{s } P_{S|U Z}(s | u, z) \left(s-s^{\prime}\right)^2,
		\end{align*}
		the joint distribution of $S U X Y Z $ is given by $ P_{UX}P_{YZS|X}$ for some pmf $P_{UX}$.
	\end{lemma}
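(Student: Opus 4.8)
Since this is an operational (single-letter) characterization of a capacity-distortion function, the plan is the classical two-part argument: a direct part producing codes whose rate approaches the value of \eqref{op} while meeting the distortion budget, and a converse showing that no achievable pair can do better. Throughout the direct part I would fix a maximizing $P_{UX}$ and the induced conditional-mean estimator $\hat S(u,z)=\mathbb E[S\mid U=u,Z=z]$, which is exactly the minimizer written in the lemma.

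The two summands in the objective suggest \emph{superposition coding}. I would split $W=(W_0,W_1)$ at rates $(R_0,R_1)$, draw $2^{nR_0}$ cloud centers $U^n(w_0)\sim\prod_iP_U$ and, for each, satellites $X^n(w_0,w_1)\sim\prod_iP_{X|U}$. The SenRx, ignorant of the message, decodes only $w_0$ from $Z^n$ (reliable when $R_0\le I(U;Z)$) and forms $\hat S_i=\hat S(U_i,Z_i)$; the ComRx, which knows $S^n$, jointly decodes $(w_0,w_1)$ from $(Y^n,S^n)$ (reliable when $R_0\le I(U;Y\mid S)$ and $R_0+R_1\le I(X;Y\mid S)$ with $R_1\le I(X;Y\mid U,S)$). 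The crucial compatibility fact is that the Markov chain $(X,S)-Y-Z$ forces $U-Y-Z$ and hence $I(U;Z)\le I(U;Y)$, while $U\perp S$ gives $I(U;Y)\le I(U;Y\mid S)$; therefore $R_0=I(U;Z)$ is decodable at \emph{both} receivers and, together with $R_1=I(X;Y\mid U,S)$, yields total rate $I(U;Z)+I(X;Y\mid U,S)$. For the distortion, conditioned on correctly decoding $U^n$ the empirical cost $\frac1n\sum_i(S_i-\hat S(U_i,Z_i))^2$ concentrates on $\mathbb E[(S-\hat S(U,Z))^2]\le D$ by the law of large numbers, and boundedness of $d$ makes the rare decoding-error event negligible.

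For the converse I would start from Fano, $nR\le I(W;\hat W)+n\epsilon_n\le I(W;Y^n,S^n)+n\epsilon_n$, and use $W\perp S^n$ to reduce this to $I(W;Y^n\mid S^n)+n\epsilon_n$. I would then single-letterize with the auxiliary $U_i=(W,Z^{\setminus i})$, where $Z^{\setminus i}=(Z^{i-1},Z_{i+1}^n)$. This choice does two things at once. First, since $W$ determines $X^n$, memorylessness yields $(Y_i,Z_i,S_i)\perp U_i\mid X_i$, so the required relation $U_i-X_i-(Y_i,Z_i,S_i)$ holds. Second, $(U_i,Z_i)=(W,Z^n)$ is at least as informative about $S_i$ as $Z^n$ alone, so MMSE optimality of the conditional mean gives $\mathbb E[(S_i-\hat S(U_i,Z_i))^2]\le \mathbb E[(S_i-\hat S_i(Z^n))^2]$; averaging and introducing a uniform time-sharing variable $Q$ (absorbed into $U$) delivers the single-letter distortion constraint, after a cardinality reduction of $\mathcal U$ via the support lemma.

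The hard part will be the rate single-letterization itself: I must show $I(W;Y^n\mid S^n)\le\sum_i\big[I(X_i;Y_i\mid U_i,S_i)+I(U_i;Z_i)\big]$, where the two terms have to emerge with \emph{different} conditioning — the communication term conditioned on the state and the sensing term unconditioned on it. Arranging this will require the Csisz\'ar sum identity to swap the $Y$- and $Z$-blocks so that the cross terms cancel and the leftover sensing contribution collapses to $\sum_iI(U_i;Z_i)$. Making the state-conditioning land on exactly one of the two terms, and checking that the same $U_i$ that tightens the rate bound is the one that also respects the Markov and estimator structure above, is the delicate bookkeeping where the real work lies.
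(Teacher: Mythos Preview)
The paper does not prove this lemma at all: it is quoted verbatim as \cite[Theorem~6]{jiao2023information} and used purely as the \emph{starting point} for the algorithmic development (Theorems~\ref{Th1} and~\ref{Th3}). There is therefore no ``paper's own proof'' to compare your proposal against; the single-letter characterization is imported wholesale from the earlier work and this paper's contribution begins only after Lemma~\ref{cob2} is stated.

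That said, your plan is the natural one for results of this type, and the achievability sketch is in good shape: the superposition layering with the SenRx decoding only the cloud center, the verification that $(X,S)-Y-Z$ together with $U-X-(Y,Z,S)$ forces $U-Y-Z$ and hence $I(U;Z)\le I(U;Y)$, and the use of $U\perp S$ (which holds because the joint law factors as $P_{UX}P_S P_{YZ|XS}$) to lift this to $I(U;Y\mid S)$ are all correct. The converse outline is also standard, and you are right that the delicate step is making the Csisz\'ar sum identity produce one term conditioned on $S$ and one not; whether the specific identification $U_i=(W,Z^{\setminus i})$ is the one that closes the argument, or whether past $Y$'s and/or other states must be folded in as well, is exactly the bookkeeping that would need to be checked against \cite{jiao2023information} rather than the present paper.
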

	It is observed that the objective function in the optimization problem \eqref{op} is the sum of two mutual information terms, which is non-convex with respect to the optimization variable $P_{UX}$. In addition, the constraint term in problem \eqref{op} is also non-convex with respect to the optimization variable $P_{UX}$ by observing that 
	\begin{align*}
		&\hat{S}(U=u, Z=z)\\
		=&\arg \min _{S^{\prime} } \sum_{s } P_{S|U Z}(s | u, z) \left(s-s^{\prime}\right)^2\\
		=&{\mathbb{E}}[S|U=u, Z=z]=\sum_{s } P_{S|U Z}(s|u,z)s\\
		=&\frac{\sum_{s } \sum_{x}P_{UX}(u,x)P_{ZS|X}(z,s|x)s}{\sum_{x}P_{UX}(u,x)P_{Z|X}(z|x)}, 
	\end{align*}
	which makes the problem \eqref{op} difficult to solve via existing methods.
	\vspace{-0.3cm}
	\subsection{Bistatic ISAC System with Log-loss Distortion}
	
	If the distortion metric is log-loss, i.e., $d(s, \hat{s})= -\log \hat{P}(s)$, where $\hat{S}=\hat{P}(S)$ is the soft estimator of $S$, the following result holds.
	\begin{lemma}{\rm \cite[Corollary 2]{chen2025fundamental}}\label{logloss}
		The capacity-distortion function $C(D)$ for the case where $X - Y - Z$ forms a Markov chain is the optimal solution to the following optimization problem
		\begin{equation} 
			\begin{aligned}\label{lol}
				\max_{P_{UX}} & \ I(X;Y|U)+I(U;Z)\\ 
				\mathrm{s.t.} & \ H(S|U,Z)\leq D, 
			\end{aligned}
		\end{equation}
		where the joint distribution of $S U X Y Z $ is given by $ P_{UX}P_{Y|X}P_{ZS|X}$ for some pmf $P_{UX}$.
	\end{lemma}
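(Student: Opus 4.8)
The plan is to prove Lemma~\ref{logloss} by the usual two-part (achievability and converse) coding argument, organized so as to exploit the special structure of the log-loss metric, which reduces the abstract estimation problem to one governed by conditional entropy. The first step I would record is the pointwise decomposition of the expected log-loss: for any soft estimate $\hat P(\cdot\mid u,z)$ formed by the SenRx from side information $(U,Z)=(u,z)$,
\begin{align*}
	\mathbb{E}\big[-\log \hat P(S\mid U,Z)\big] &= H(S\mid U,Z)+\mathbb{E}\big[D\big(P_{S\mid UZ}\,\big\|\,\hat P\big)\big]\\
	&\ge H(S\mid U,Z),
\end{align*}
with equality exactly when $\hat P(\cdot\mid u,z)=P_{S\mid UZ}(\cdot\mid u,z)$. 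Thus the minimal achievable log-loss distortion given $(U,Z)$ is precisely $H(S\mid U,Z)$, which is what turns a generic distortion constraint $\mathbb{E}[d(S,\hat S(U,Z))]\le D$ into the entropy constraint $H(S\mid U,Z)\le D$ in \eqref{lol}. I would therefore treat Lemma~\ref{logloss} as the log-loss specialization of the general capacity-distortion characterization (the estimator $P_{S\mid UZ}$ here playing the role that $\hat S(u,z)=\mathbb{E}[S\mid U=u,Z=z]$ plays for SE distortion in Lemma~\ref{cob2}), and spend the remaining effort justifying that characterization.

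For achievability I would use superposition coding built around the auxiliary $U$. Splitting $W$ into a common part of rate $R_0$ and a private part of rate $R_1$, generate $2^{nR_0}$ cloud centers $u^n\sim\prod_i P_U$ and, for each, $2^{nR_1}$ satellites $x^n\sim\prod_i P_{X\mid U}$. The crucial design choice is that the common layer must be recoverable at the SenRx, so that $u^n$ can serve as estimation side information; by the Markov chain $X-Y-Z$ (hence $U-X-Y-Z$) and joint typicality, $u^n$ is decodable from $Z^n$ whenever $R_0<I(U;Z)$, and is then \emph{a fortiori} decodable from $Y^n$ at the ComRx since $I(U;Y)\ge I(U;Z)$. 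The ComRx next decodes the satellite index given $u^n$ provided $R_1<I(X;Y\mid U)$, so the total reliably conveyed rate $R_0+R_1$ approaches $I(U;Z)+I(X;Y\mid U)$. Having recovered $\hat u^n=u^n$, the SenRx forms the per-symbol posterior $\hat P(\cdot)=P_{S\mid UZ}(\cdot\mid u_i,Z_i)$, whose average log-loss converges to $H(S\mid U,Z)\le D$ by the displayed identity and a typical-average argument.

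For the converse I would begin from Fano's inequality, $nR\le I(W;Y^n)+n\epsilon_n$, expand by the chain rule, and single-letterize with the Csisz\'ar sum identity, identifying an auxiliary $U_i$ (for instance $U_i=(W,Z^{i-1})$ or the tuple dictated by that identity) so that the accumulated mutual-information terms collapse to $I(X;Y\mid U)+I(U;Z)$; the chain $X-Y-Z$ is exactly what makes the cross terms telescope and forces the induced law to factor as $P_{UX}P_{Y\mid X}P_{ZS\mid X}$. For the distortion, since the SenRx estimator is a function of $Z^n$, I would lower bound $D^{(n)}=\frac1n\sum_i\mathbb{E}[-\log\hat P_i(S_i)]\ge\frac1n\sum_i H(S_i\mid U_i,Z_i)\ge H(S\mid U,Z)$ using the nonnegativity of the divergence term, whence any achievable scheme obeys $H(S\mid U,Z)\le D$, matching \eqref{lol}.

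The step I expect to be the main obstacle is the converse, specifically choosing a single auxiliary $U$ that simultaneously linearizes both $I(X;Y\mid U)$ and $I(U;Z)$ after single-letterization, is consistent with the prescribed factorization, and supports the distortion lower bound. These requirements pull the definition of $U_i$ in different directions — the rate bound naturally wants $U_i$ to carry $W$ and the decoded past, whereas the distortion bound (with an estimator depending on the whole of $Z^n$) wants $U_i$ to absorb the remaining $Z$-coordinates — and reconciling them without violating the Markov structure, typically via a time-sharing variable and careful bookkeeping with the Csisz\'ar sum identity, is the delicate part. By comparison, the achievability and the reduction of log-loss to conditional entropy are routine once the superposition structure and the posterior-estimator identity are in hand.
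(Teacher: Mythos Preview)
The paper does not prove Lemma~\ref{logloss}; it is quoted verbatim from \cite[Corollary~2]{chen2025fundamental} and used as a black box, so there is no in-paper argument to compare your proposal against. Your outline (the log-loss identity reducing the distortion constraint to $H(S\mid U,Z)\le D$, superposition achievability exploiting the degradedness $X\!-\!Y\!-\!Z$ so that the SenRx can recover the cloud center, and a Csisz\'ar-sum single-letterization for the converse) is the standard route for this class of results and is consistent with how the cited reference establishes it; your identification of the converse auxiliary as the delicate step is also accurate.
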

	It is observed that the constraint term in optimization problem \eqref{lol} remains non-convex with respect to the optimization variable $P_{UX}$, which poses a challenge to the problem solution.
	
	\section{Extended AB algorithm for bistatic ISAC}\label{SecDBAI}
	In this section, we focus on solving the optimization problems \eqref{op} and \eqref{lol}. Specifically, we show that the original optimization problem is equivalent to an optimization problem with linear constraints and develop an extended AB algorithm for the latter. 
	\vspace{-0.2cm}
	\subsection{Algorithm for Optimization Problem \eqref{op}} \label{SubSecad}
	In this subsection, we focus on solving the optimization problem \eqref{op}. Note that the non-convex constraint in optimization problem  constitutes the key obstacle to solving this problem using the existing AB algorithm. To address this issue, we introduce a new variable $c(u,z)$ that transform the non-convex constraint into a linear constraint and the corresponding optimization problem with the linear constraint is as follows
	\begin{align}\label{op2}
		\max_{p(u,x)\atop c(u,z)} & \sum p(u,x) p(y,s,z|x)\log\frac{p(x|u,y,s)p(u|z)}{p(u,x)}\notag\\  
		\mathrm{s.t.} &
		\sum p(u,x)p(z,s|x)(s-c(u,z))^2\leq D\\
		&
		\sum p(u,x)=1\notag.
	\end{align}
	For the optimization problem \eqref{op2}, we have the following result.
	\begin{theorem}\label{Th1}
		The optimal solution of the original optimization problem \eqref{op} is the same as that of the optimization problem \eqref{op2}.
	\end{theorem}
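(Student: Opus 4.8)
The plan is to establish that problems \eqref{op} and \eqref{op2} share the same optimal value and the same optimal $P_{UX}$, exploiting the fact that the newly introduced variable $c(u,z)$ enters only the distortion constraint and never the objective. First I would check that the objective of \eqref{op2}, once the conditionals induced by $p(u,x)P_{YZS|X}$ are substituted, reduces to $I(X;Y|U,S)+I(U;Z)$; this follows by splitting $\log\frac{p(x|u,y,s)p(u|z)}{p(u,x)}$ into entropy terms and invoking the Markov structure $U-X-(Y,Z,S)$, and is routine bookkeeping. The substantive work is therefore confined to the constraint.

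The key step is a pointwise least-squares minimization over $c$. Fixing $p(u,x)$, I would rewrite the left-hand side of the distortion constraint in \eqref{op2} by grouping terms according to $(u,z)$ and using $\sum_{x} p(u,x)\,p(z,s|x) = p(u,z,s)$, giving
\begin{equation*}
\sum_{u,x,z,s} p(u,x)\,p(z,s|x)\,(s-c(u,z))^2 = \sum_{u,z} p(u,z)\sum_{s} p(s|u,z)\,(s-c(u,z))^2.
\end{equation*}
Since the terms decouple across $(u,z)$ and each inner sum is a nonnegative quadratic in the scalar $c(u,z)$, the minimum over $c$ is attained separately at the conditional mean $c^\star(u,z)=\sum_{s} p(s|u,z)\,s = \mathbb{E}[S\mid U=u,Z=z]$, which is exactly $\hat S(u,z)$ as defined in Lemma~\ref{cob2}. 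Consequently the minimal value of the left-hand side equals $\mathbb{E}[(S-\hat S(U,Z))^2]$, i.e.\ the left-hand side of the constraint in \eqref{op}.

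With this identity in hand, I would close the argument in both directions. For a fixed $p(u,x)$, a feasible $c$ exists in \eqref{op2} if and only if the minimal distortion $\mathbb{E}[(S-\hat S(U,Z))^2]$ does not exceed $D$, i.e.\ if and only if $p(u,x)$ is feasible for \eqref{op}; moreover, replacing any feasible $c$ by $c^\star=\hat S$ only lowers the left-hand side and hence preserves feasibility. Because the objective is independent of $c$ and coincides with that of \eqref{op}, any optimizer $P_{UX}^\star$ of \eqref{op} yields a feasible pair $(P_{UX}^\star,\hat S)$ for \eqref{op2} with the same objective value, and conversely any optimizer $(P_{UX}^\star,c^\star)$ of \eqref{op2} gives a feasible $P_{UX}^\star$ for \eqref{op} with the same value. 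The normalization $\sum p(u,x)=1$ is merely the valid-pmf requirement common to both. Hence the two problems attain equal optima at the same $P_{UX}$.

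The main obstacle I anticipate is not the minimization itself but making the two-directional feasibility equivalence watertight: one must argue that the projection of the feasible set of \eqref{op2} onto the $P_{UX}$-coordinates is exactly the feasible set of \eqref{op}, which hinges on the claim that the constraint is satisfiable for some $c$ precisely when its $c$-minimum---shown above to equal the original non-convex distortion---is at most $D$. Care is also needed to confirm that the conditional mean $\hat S(u,z)$ is well defined at every $(u,z)$ with $p(u,z)>0$ and that values of $c$ on null events are immaterial, so that the recovered constraint matches \eqref{op} exactly rather than merely up to a relaxation.
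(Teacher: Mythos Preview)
Your proposal is correct and reaches the same conclusion as the paper, but via a more elementary route. The paper argues one inequality by noting that the feasible set of \eqref{op} embeds into that of \eqref{op2}, and for the reverse direction it invokes the KKT stationarity condition on the Lagrangian of \eqref{op2} with respect to $c(u,z)$ to deduce that any optimizer must satisfy $c^\star(u,z)=\hat S(u,z)$, hence is feasible for \eqref{op}. You instead bypass the Lagrangian entirely: by decoupling the constraint over $(u,z)$ and minimizing the resulting scalar quadratic directly, you show that the projection of the feasible set of \eqref{op2} onto $P_{UX}$ coincides exactly with the feasible set of \eqref{op}, so the equivalence follows without any multiplier machinery. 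Your argument is cleaner in that it avoids implicit regularity assumptions for KKT and the edge case where the distortion constraint is inactive ($\lambda=0$), in which stationarity in $c$ does not by itself pin down $c^\star$; the paper's version is terser but leans on optimization folklore.
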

	\begin{proof}	
		The objective function in the original optimization problem \eqref{op} is as follows
		\begin{align*}
			&F\big(p(u,x)\big)=I(X;Y|U,S)+I(U;Z)\\
			=&\sum p(u,x) p(y,s,z|x)\log\frac{p(x|u,y,s)p(u|z)}{p(u,x)}.
		\end{align*}
		By comparing the original optimization problem \eqref{op} with the new optimization problem \eqref{op2}, we observe that the feasible set of the problem \eqref{op} is a subset of the feasible set of the problem \eqref{op2}. Therefore, the maximum value of the objective function in the optimization problem \eqref{op} is less than or equal to that in the problem \eqref{op2}. On the other hand, the Lagrangian function corresponding to the optimization problem \eqref{op2} is
		\begin{align*}
			&L(p(u,x),c(u,z),\lambda,\alpha)\\
			=&F\big(p(u,x)\big)+ \lambda \big(D-\sum p(u,x)p(z,s|x) (s-c(u,\\
			&z))^2\big)+ \alpha \big(\sum p(u,x)-1\big),
		\end{align*}
		where $\lambda$ and $\alpha$ are multipliers introduced for the constraints. 
		Denote the optimal solution of the optimization problem \eqref{op2} by $(p^*,c^*)$. According Karush-Kuhn-Tucker (KKT) condition, we have $L'_{c(u,z)}(p^*,c^*)=0$.
		Thus, we get 
		\begin{align*}
			\sum_{x,s} p^*(u,x)p(z,s|x) c^*(u,z)=\sum_{x,s} p^*(u,x)p(z,s|x) s,
		\end{align*}
		i.e., $c^*(u,z)=\sum_{s} p(s|u,z) s=\hat{s}(u,z)$. This implies $(p^*,c^*)$ is also a feasible solution to the original optimization problem \eqref{op}, which completes the proof.
	\end{proof}	
	Following the idea of AB algorithm, we define
	\begin{align}\label{tf}
		&\tilde{F}\big(p(u,x),q(x|u,y,s),q(u|z)\big)\notag\\
		=&\sum p(u,x) p(y,s,z|x)\log\frac{q(x|u,y,s)q(u|z)}{p(u,x)},
	\end{align}
	which is a concave function of the optimization variable $p(u,x)$ for fixed $q(x|u,y,s)$ and $q(u|z)$. Furthermore, we obtain that 
	\begin{align*}
		&F\big(p(u,x)\big)-\tilde{F}\big(p(u,x),q(x|u,y,s),q(u|z)\big)\\
		=&\sum p(u,x)p(y,s|x)D\big(p(x,y|u,s)\|\ q(x,y|u,s)\big)+\\
		&\sum p(u,z)D\big(p(u|z)\|\ q(u|z)\big)\geq   0,
	\end{align*}
	where the equality holds when $q(x|u,y,s)=p(x|u,y,s)$ and $q(u|z)=p(u|z)$. Therefore, based on Theorem \ref{Th1}, the original optimization problem \eqref{op} is equivalent to the following optimization problem 
	\begin{align}\label{opq}
		\max_{p(u,x)\atop c(u,z)} \max_{ q(\cdot|\cdot)}& \sum p(u,x) p(y,s,z|x)\log\frac{q(x|u,y,s)q(u|z)}{p(u,x)}\notag \\ 
		\mathrm{s.t.} &
		\sum p(u,x)p(z,s|x)(s-c(u,z))^2\leq D,\\
		&
		\sum p(u,x)=1,\notag
	\end{align}
	where the optimal $q(\cdot|\cdot)$ satisfies 
	\begin{align}\label{qq}
		q(x|u,y,s)=p(x|u,y,s) \quad \text{and} \quad q(u|z)=p(u|z)
	\end{align}
	and the optimal $c(u,z)$ satisfies
	\begin{align}\label{c} c(u,z)=\hat{s}(u,z)=\frac{\sum_{s,x}p_(u,x)p(z,s|x)s}{\sum_{x}p(u,x)p(z|x)}.
	\end{align} 
	The Lagrangian function corresponding to the optimization problem \eqref{opq} is
	\begin{align}\label{lagr}
		&L(p(u,x),c(u,z),q,\lambda,\alpha)\notag\\
		=&\tilde{F}\big(p(u,x),q(x|u,y,s),q(u|z)\big)+ \lambda \big(D-\sum p(u,x)\notag\\
		&p(z,s|x) (s-c(u,z))^2\big)+ \alpha \big(\sum p(u,x)-1\big),
	\end{align}
	where $\lambda$ and $\alpha$ are multipliers introduced for the constraints. 
	By setting the gradient of \eqref{lagr} with respect to the optimization variable $p(u,x)$ to zero, we get
	\begin{align} \label{oppux}
		p(u,x)= \frac{e^{d[q](u,x)-\lambda w(u,x)}}{\sum_{u,x} e^{d[q](u,x)-\lambda w(u,x)}}
	\end{align}
	where $d[q](u,x)=\sum_{y,s}p(y,s|x)\ln q(x|u,y,s)+\sum_{z}p(z|x)\ln q(u|z)$ and $w(u,x)= \sum_{s,z}p(z,s|x)\big(s-c(u,z)\big)^2$.
	Furthermore, since the capacity-distortion function of the system is monotonically non-decreasing with respect to the distortion $D$ \cite[Lemma 1]{jiao2023information}, the optimal solution of the optimization problem must be found on the boundary of the distortion constraint set. In other words, the distortion constraint is satisfied in an equality form at the optimal solution. Therefore, $\lambda$ satisfies
	\begin{align*} 
		G(\lambda) := D-\sum_{u,x}\frac{e^{d[q](u,x)-\lambda w(u,x)}}{\sum_{u,x} e^{d[q](u,x)-\lambda w(u,x)}}w(u,x)=0.
	\end{align*}
	Let $g_\lambda (u,x) = d[q](u,x)-\lambda w(u,x)$, then we get  $G'(\lambda)\geq 0$ according to Cauchy-Schwarz inequality.
	Thus, the equation $G(\lambda)=0, \lambda \geq 0$ has a unique solution when $D$ is achievable.
	Based on the derivation above, we present the proposed extended AB algorithm for optimization problem \eqref{op} in Algorithm \ref{Alg:RBA-MSE}, where the original variable $p$ and the additionally introduced variables $q$ and $c$ are updated in closed form.
	\begin{algorithm}[!h]
		\caption{Extended AB Algorithm for the Optimization Problem \eqref{op}}\label{Alg:RBA-MSE}
		\small
		\begin{algorithmic}[1]
			\REQUIRE $p(y|x,s)$, $p(z|x,s)$, $p(s)$, $p_{0}(u,x)$, $c_{0}(u,z)$, and $D$. 
			\FOR{$k=1,2,3,\dots$}
			\STATE
			Update $q_k(x|u,y,s)$ and $q_k(u|z)$ based on \eqref{qq}.
			\STATE
			Solve $\lambda_{k}$ based on the equation $G(\lambda) =0$.
			\STATE
			Update $p_k(u,x)$ based on \eqref{oppux}.
			\STATE 	
			Update $c_k(u,z)$ based on \eqref{c}.
			\ENDFOR
			\ENSURE $p(u,x)$.
		\end{algorithmic}
	\end{algorithm}
	\begin{remark}
		For the Gaussian channel model with power constraint for the bistatic ISAC system, the corresponding optimization problem with a SE distortion constraint and a power constraint is
		\begin{equation} 
			\begin{aligned}\label{opadp}
				\max_{P_{UX}} & \ I(X;Y|U,S)+I(U;Z)\\ 
				\mathrm{s.t.} & \ 
				\mathbb{E}\big[\big(S-\hat{S}(U,Z)\big)^2\big]\leq D,  \mathbb{E}[X^2] \leq B. 
			\end{aligned}
		\end{equation}
		For this optimization problem, Algorithm \ref{Alg:RBA-MSE} can be applied by replacing the Step 3 with solving a set of equations to determine the multipliers $\lambda$ and $\mu$ introduced due to the distortion and power constraints.
		In addition, the updated $p(u,x)$ in Step 4 is based on the following form
		\begin{align*}
			p_{k}(u,x)=\frac{e^{d[q_{k}](u,x)-\lambda_{k}w_{k-1}(u,x)-\mu x^2}}{\sum e^{d[q_{k}](u,x)-\lambda_{k}w_{k-1}(u,x)-\mu x^2}}.
		\end{align*}
	\end{remark}
	\subsection{Algorithm for Optimization Problem \eqref{lol}} \label{SubSeclol}
	In this subsection, we apply the proposed framework to solve the optimization problem involving a log-loss distortion constraint. 
	Following the framework discussed in the previous subsection, we replace the estimator $p(s|u,z)$ in the optimization problem \eqref{lol} by a free function $f(s,u,z)$. Then, similar to the results above, we get the optimal solution of the original optimization problem \eqref{lol} is the same as that of the optimization problem 
	\vspace{-0.3cm}
	\begin{align*}
		\max_{p(u,x),\atop f(u,s,z)} \max_{ q(\cdot|\cdot)} & \ \sum p(u,x) p(y,s,z|x)\ln\frac{q(x|u,y)q(u|z)}{p(u,x)} \\  
		\mathrm{s.t.} &\ 
		-\sum p(u,x)p(z,s|x)\ln f(u,s,z)\leq D\\
		&\ \sum p(u,x)=1, \sum_{s} f(u,s,z)=1.
	\end{align*}
	In addition, the optimal $q(\cdot|\cdot)$ satisfies $q(x|u,y)=p(x|u,y)$ and $q(u|z)=p(u|z)$, the optimal $f(u,s,z)$ satisfies $f(u,s,z)=p(s|u,z)$, and the optimal $p(u,x)$ satisfies
	\begin{align*} 
		p(u,x)= \frac{e^{d[q](u,x)+\lambda w(u,x)}}{\sum_{u,x}e^{d[q](u,x)+\lambda w(u,x)}},
	\end{align*}
	where $d[q](u,x)=\sum_{y,z}p(y,z|x)\ln q (x|u,y)q (u|z)$, $w(u,x)=\sum_{s,z}p(z,s|x)\ln f(s,u,z)$, and $\lambda$ satisfies 
	\begin{align*} 
		G_l(\lambda) :=D+\frac{\sum e^{d[q](u,x)+\lambda w(u,x)}w(u,x)}{\sum e^{d[q](u,x)+\lambda w(u,x)}}=0.
	\end{align*}
	Thus, we obtain the extended AB algorithm similar to algorithm \ref{Alg:RBA-MSE} to solve the optimization problem \eqref{lol}, the details of which are omitted due to space limitations.
	\vspace{-0.2cm}
	\subsection{Convergence Analysis} \label{SubSecCA}
	In this subsection, we primarily demonstrate the convergence of Algorithm \ref{Alg:RBA-MSE}. Let $p_{n-1}$ and $q_{n-1}$ be the values of the variables $p(u,x)$ and $\big(q (x|u,y,s),q(u|z)\big)$ at the $(n-1)$-th iteration of the algorithm, respectively.  We have the following result for the function values $\tilde{F}$ in \eqref{tf} generated in the iterations.
	\begin{theorem}\label{Th3}
		The function values $\tilde{F}$ generated in the iterations of Algorithm \ref{Alg:RBA-MSE} are monotonically non-decreasing and bounded, which satisfy $\tilde{F}(p_{n-1},$ $q_{n-1}){\leq} \tilde{F}(p_{n-1},q_{n}){\leq} \tilde{F}(p_{n},q_{n})$.
	\end{theorem}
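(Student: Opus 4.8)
The plan is to prove the two-step chain by treating each inequality as the optimality of one block of variables with the other block held fixed, and then to obtain convergence from monotonicity together with boundedness.

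First, I would establish $\tilde{F}(p_{n-1},q_{n-1}) \leq \tilde{F}(p_{n-1},q_n)$ by fixing $p=p_{n-1}$ and invoking the gap identity derived just after \eqref{tf}: the difference $F(p)-\tilde{F}(p,q)$ is a sum of relative entropies, hence nonnegative, with equality exactly when $q$ obeys \eqref{qq}. Since Step 2 sets $q_n$ to that equality-achieving choice for $p_{n-1}$, we get $\tilde{F}(p_{n-1},q_n)=F(p_{n-1}) \geq \tilde{F}(p_{n-1},q_{n-1})$, which requires nothing beyond nonnegativity of the KL divergence.

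Second, for $\tilde{F}(p_{n-1},q_n) \leq \tilde{F}(p_n,q_n)$ I would fix $q=q_n$ and argue that $p_n$ globally maximizes the map $p \mapsto \tilde{F}(p,q_n)$ over the feasible set. Concavity in $p$ is noted right after \eqref{tf}, and with $c$ held fixed the constraints in \eqref{op2} are linear in $p$, so the feasible region is convex; hence the stationarity condition \eqref{oppux} together with the equality-distortion choice of $\lambda_n$ (solving $G(\lambda)=0$) is sufficient for global optimality, not merely necessary. The inequality then follows once $p_{n-1}$ is shown to lie in this feasible set.

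The hard part will be exactly this feasibility check, and it hinges on the $c$-update in Step 5. At iteration $n-1$ the multiplier was chosen so that $p_{n-1}$ attains the distortion bound with equality under $c_{n-2}$; Step 5 then replaces $c_{n-2}$ by $c_{n-1}=\hat{s}$, which by \eqref{c} is the minimum mean square estimator and thus the pointwise minimizer of the quadratic distortion for $p_{n-1}$. Therefore the distortion of $p_{n-1}$ recomputed with $c_{n-1}$ can only decrease, stays $\leq D$, and $p_{n-1}$ remains feasible for the constraint active in Step 4; this is the link that validates the second inequality. Finally, boundedness follows from $\tilde{F}(p,q) \leq F(p) = I(X;Y|U,S)+I(U;Z) \leq \log|\mathcal{X}|+\log|\mathcal{U}|$, so a monotonically non-decreasing sequence bounded above converges, completing the argument.
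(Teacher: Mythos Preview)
Your proposal is correct, and the first inequality and the boundedness step match the paper's argument essentially verbatim. For the second inequality, however, you take a somewhat different route than the paper. You argue abstractly: with $q_n$ and $c_{n-1}$ fixed, the subproblem is concave with linear constraints, so the KKT point \eqref{oppux} with $\lambda_n\geq 0$ is a global maximizer, and hence it suffices to check that $p_{n-1}$ is feasible, which you deduce from the MMSE property of $c_{n-1}$. The paper instead performs an explicit algebraic computation: it rewrites $\tilde{F}(p_n,q_n)-\tilde{F}(p_{n-1},q_n)$ as $D(p_{n-1}\|p_n)+\lambda_n\big(\sum p_{n-1}w_{n-2}-\sum p_{n-1}w_{n-1}\big)$, using $G(\lambda_n)=0$ to identify $\sum p_n w_{n-1}=\sum p_{n-1}w_{n-2}=D$, and then invokes the same MMSE property of $c_{n-1}$ to conclude the bracket is nonnegative. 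The underlying insight---that Step~5 makes $p_{n-1}$ satisfy the distortion constraint with $c_{n-1}$---is identical in both; your version is cleaner and more conceptual, while the paper's explicit decomposition is self-contained (it does not appeal to KKT sufficiency) and yields the quantitative lower bound $D(p_{n-1}\|p_n)$ on the per-step improvement.
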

	\begin{proof}
		For convenience, we omit the variables corresponding to the probability distribution when there is no ambiguity.
		Recalling that the definition of  $\tilde{F}(p,q)$ in \eqref{tf} and the update expression of $q$ in Theorem \ref{Th1}, we get 
		\begin{align*}
			&\tilde{F}(p_{n-1},q_{n})-\tilde{F}(p_{n-1},q_{n-1})\\
			=&\sum  p_{n-1} p(y,z,s|x)   \log\frac{p_{n-1}(x|u,y,s)p_{n-1}(u|z)}{q_{n-1}(x|u,y,s)q_{n-1}(u|z)}\\
			=&\sum  p_{n-1}(u,y,s) p_{n-1}(x|u,y,s)   \log\frac{p_{n-1}(x|u,y,s)}{q_{n-1}(x|u,y,s)}\\
			&+\sum  p_{n-1}(z) p_{n-1}(u|z) \log\frac{p_{n-1}(u|z)}{q_{n-1}(u|z)}\\
			=&\sum  p_{n-1}(u,y,s) D \big( p_{n-1}(x|u,y,s)||q_{n-1}(x|u,y,s)\big)\\
			&+\sum  p_{n-1}(z)D\big(  p_{n-1}(u|z)||q_{n-1}(u|z)\big)\\
			\geq& \ 0, 
		\end{align*}
		where $p_{n-1}(x|u,y,s)$ refers to the conditional density function generated by $p_{n-1}$, i.e., $p_{n-1}(x|u,y,s)=p_{n-1}p(y,s|x)/\sum_{x}p_{n-1}p(y,s|x)$, and the others are similar.
		
		According to the update expression of $p(u,x)$ in \eqref{oppux}, we obtain that
		\begin{align*}
			&\tilde{F}(p_{n},q_{n})-\tilde{F}(p_{n-1},q_{n})\\
			=&\sum  p_n(d[q_n]-\log p_n )-\sum  p_{n-1}(d[q_n]-\log p_{n-1} )\\
			=&\sum  p_n\lambda_n w_{n-1}+\log \sum e^{d[q_{n}]-\lambda_{n}w_{n-1}}  -\sum p_{n-1}(\\ &d[q_n]-\log p_{n-1} )\\
			=&\sum  p_n\lambda_n w_{n-1}+\sum  p_{n-1}\log \sum e^{d[q_{n}]-\lambda_{n}w_{n-1}} \\
			&-\sum  p_{n-1}(d[q_n]-\lambda_n w_{n-1}+\lambda_n w_{n-1}-\log p_{n-1} )
		\end{align*}
		\begin{align*}
			=&\sum  p_n\lambda_n w_{n-1}-\sum  p_{n-1}\log p_n -\sum  p_{n-1}\lambda_n w_{n-1}\\&+\sum  p_{n-1}\log p_{n-1} \\
			=&D(p_{n-1}||p_n)+\lambda_n\sum  p_n w_{n-1}-\lambda_n\sum  p_{n-1} w_{n-1}.
		\end{align*}
		Due to the fact that $\lambda$ satisfy $G(\lambda)=0$, we have
		\begin{align*}
			&\sum p_n w_{n-1}=\sum p_n p(z,s|x)(s-c_{n-1})^2\\
			=&\sum p_{n-1} p(z,s|x)(s-c_{n-2})^2\\
			=&\sum p_{n-1}w_{n-2}=D.
		\end{align*}
		Therefore, we get 
		\begin{align*}
			&\tilde{F}(p_{n},q_{n})-\tilde{F}(p_{n-1},q_{n})\\
			=&D(p_{n-1}||p_n)+\lambda_n\sum  p_n w_{n-1}-\lambda_n\sum  p_{n-1} w_{n-1}\\
			=&D(p_{n-1}||p_n)+\lambda_n\sum  p_{n-1}(w_{n-2}-w_{n-1}) \\
			=&D(p_{n-1}||p_n)+\lambda_n\sum  p_{n-1} p(z,s|x)(s-c_{n-2})^2\\
			&-\lambda_n\sum  p_{n-1} p(z,s|x)(s-c_{n-1})^2\\
			\geq& \ 0,
		\end{align*}
		where the last inequality holds by the fact that $c_{n-1}$ is the optimal estimator when the corresponding input distribution is $p_{n-1}$, i.e., its corresponding distortion is minimized.   
	 Furthermore, we have $\tilde{F}(p,q)\leq F(p)\leq I(X;Y|S)+I(X;Z)$, which completes the proof.
\end{proof}
	
	\section{Numerical Simulations} \label{SecNS}
	In this section, we evaluate the performance of the proposed algorithms for bistatic ISAC systems. 
	Consider an additive Gaussian channel model with a power constraint for the bistatic ISAC system. The channel from the ISAC Tx to the ComRx is  $Y=S+X+N_1$ and the channel from the ISAC Tx to the SenRx is  $Z=S+X+N_2$, where $S$, $N_1$ and $N_2$ are independently generated from $S \sim \mathcal{N} (0,\sigma_s^2),N_1 \sim \mathcal{N} (0,\sigma_1^2)$, and $N_2 \sim \mathcal{N} (0,\sigma_2^2)$.
	\begin{figure}[h]
		\vspace{-0.4cm}
		\centering
		\includegraphics[width=3in]{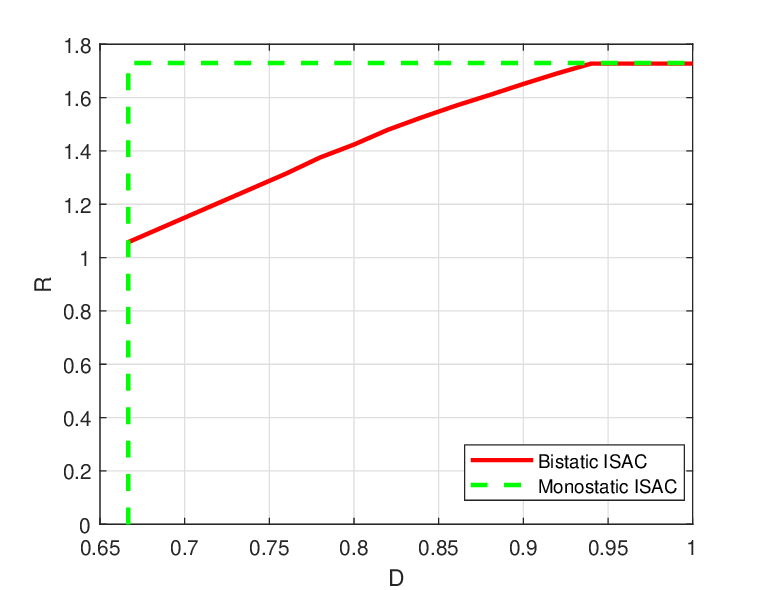}
		\vspace{-0.3cm}
		\caption{Rate-distortion functions for monostatic and bistatic ISAC systems with SE distortion constraint.}
		\label{RDAG}
		\vspace{-0.58cm}
	\end{figure}
	
	In Fig. \ref{RDAG}, we compare the rate-distortion functions of the above channel model for a monostatic ISAC system and a bistatic ISAC system, where $\sigma_s^2=\sigma_1^2=1$, $\sigma_2^2=2$ and $B=10$. 
	In a monostatic ISAC system, the estimator is aware of $X$, resulting in the distortion being $\mathrm{Var}(S|X, Z)=2/3 $ that is independent of the distribution of $X$. On the other hand, the rate is $I(X;Y|S)$, which reaches a maximum value $1/2\log_2 11=1.7297$ when $X \sim \mathcal{N} (0,10)$. As illustrated in Fig. \ref{RDAG}, the rate-distortion function for the monostatic ISAC system is located to the upper left of the rate-distortion function of the bistatic system, indicating that the system suffers from a performance loss when the estimator lacks knowledge of the sent information.
	\begin{figure}[h]
		\vspace{-0.4cm}
		\centering
		\includegraphics[width=3in]{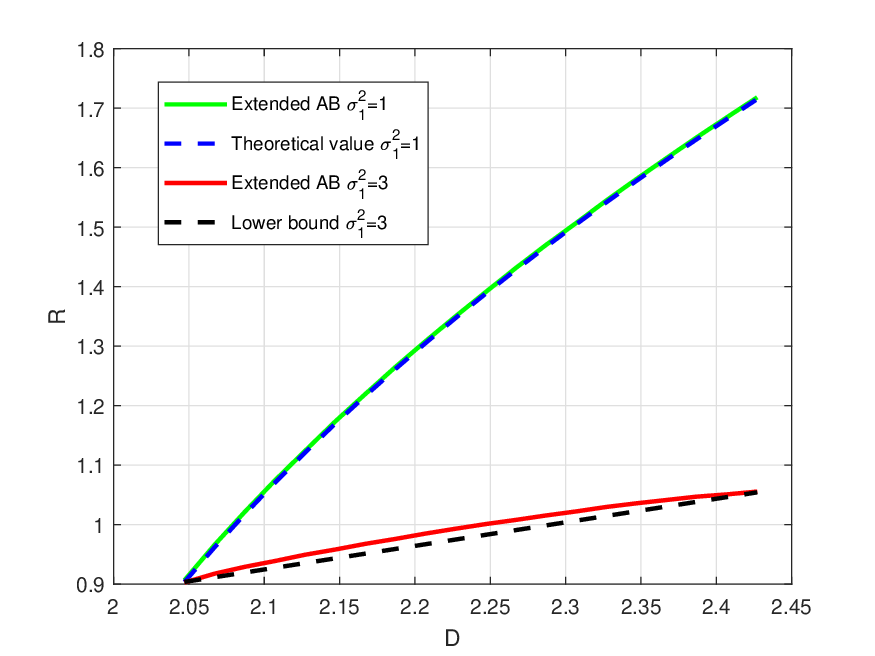}
		\vspace{-0.3cm}
		\caption{Rate-distortion function of the bistatic ISAC system with log-loss distortion constraint.}
		\label{loladd}
		\vspace{-0.3cm}
	\end{figure}
	
	Fig. \ref{loladd} depicts the rate-distortion functions of the bistatic ISAC system with log-loss distortion constraint under different parameters. Specifically, the channel parameters satisfy $\sigma_s^2=2$, $\sigma_2^2=2$, and $B=10$, and $\sigma_1^2=1<\sigma_2^2$ for the first case and $\sigma_2^2<\sigma_1^2=3<\sigma_2^2+\sigma_s^2$ for the second case. It is observed that the rate-distortion function curve obtained by the algorithm aligns well with the theoretical value for the first case, and the obtained rate-distortion function curve outperforms the theoretical lower bound \cite{chen2025fundamental} for the second case. In summary, the effectiveness of the proposed algorithm is verified.
	\vspace{-0.3cm}
	\section{Conclusions} \label{SecCon}
	In this paper, we proposed an extended AB framework to solve rate-distortion problems with non-convex constraints. Specifically, for bistatic ISAC systems, we formulated equivalent optimization problems to the rate-distortion problems for the cases with SE and log-loss distortion constraints and developed extended AB algorithms that update all variables in closed form, respectively. Additionally, we proved the convergence of the algorithm. Numerical simulations are provided to validate the effectiveness of the proposed algorithm.
	\vspace{-0.3cm}
	\bibliographystyle{IEEEtran}
	\bibliography{IEEEabrv, ISACBAref}

\end{document}